\newcommand{\sra}{\mbox{{\footnotesize $\rightarrow$}}}
\begin{document}

\mainmatter  


\title{Variant-Based Decidable Satisfiability in Initial Algebras with Predicates\thanks{Partially supported by NSF Grant CNS
    14-09416, the EU (FEDER), Spanish MINECO project
    TIN2015-69175-C4-1-R and GV project PROMETEOII/2015/013.}}

\titlerunning{Variant Satisfiability in Initial Algebras with Predicates}

\author{Ra\'ul Guti\'errez\inst{1} and Jos\'{e} Meseguer\inst{2}}

\authorrunning{R. Gutierrez and J. Meseguer}

\institute{Universitat Polit\`ecnica de Val\`encia \and University of Illinois at Urbana-Champaign}

\maketitle

\begin{abstract} Decision procedures can be either \emph{theory-specific},
  e.g., Presburger arithmetic, or \emph{theory-generic}, applying to
  an infinite number of user-definable theories.  Variant
  satisfiability is a theory-generic procedure for quantifier-free
  satisfiability in the initial algebra of an  order-sorted  equational theory
  $(\Sigma,E \cup B)$ under two conditions: (i) $E \cup B$ has the 
\emph{finite variant property} and $B$ has a finitary unification
algorithm; and (ii) $(\Sigma,E \cup B)$ protects a constructor
subtheory  $(\Omega,E_{\Omega} \cup B_{\Omega})$ that is
OS-\emph{compact}.  These conditions apply to many
user-definable theories, but have a main limitation: they apply well to
\emph{data structures}, but often do \emph{not} hold for
user-definable \emph{predicates} on such data structures.
We present a theory-generic satisfiability decision
procedure, and a prototype implementation, extending variant-based 
satisfiability  to initial algebras with user-definable predicates under
fairly general conditions.

\noindent {\bf Keywords}: finite variant property (FVP), OS-compactness,
user-definable predicates, decidable validity and satisfiability in initial algebras.
\end{abstract}

\section{Introduction}

Some of the most important recent advances in software verification
are due to the systematic use of decision procedures in both model
checkers and theorem provers.  However, a key limitation in exploiting
the power of such decision procedures is their current \emph{lack of
  extensibility}.  The present situation is as follows.  Suppose a
system has been formally specified as a theory $T$ about which we want
to verify some properties, say $\varphi_{1},\ldots,\varphi_{n}$, using
some model checker or theorem prover that relies on an SMT solver for
its decision procedures.  This limits \emph{a priori} the decidable
subtheory $T_{0} \subseteq T$ that can be handled by the SMT solver.
Specifically, the SMT solver will typically support a fixed set
$Q_{1},\ldots,Q_{k}$ of decidable theories, so that, using a theory
combination method such as Nelson and Oppen
\cite{DBLP:journals/toplas/NelsonO79}, or Shostak
\cite{Shostak:combination}, $T_{0}$ must be a finite combination of
the decidable theories $Q_{1},\ldots,Q_{k}$ supported by the SMT
solver.

In non-toy applications it is unrealistic to expect that the entire
specification $T$ of a software system will be decidable.  Obviously,
the bigger the decidable subtheory  $T_{0} \subseteq T$, the
higher the levels of automation and the greater the chances of scaling
up the verification effort.  With
\emph{theory-specific} procedures for, say, $Q_{1},\ldots,Q_{k}$,
the decidable fragment $T_{0}$ of $T$ is a
priori bounded.  One promising way to extend the decidable fragment
$T_{0}$ is to develop \emph{theory-generic} satisfiability
procedures.  These are procedures that 
make decidable not a single theory $Q$, but an \emph{infinite class}
of  \emph{user-specifiable} theories.  Therefore,
an SMT solver supporting both theory-specific and theory-generic
decision procedures becomes \emph{user-extensible} and can carve out a
potentially much bigger 
Decidable Fragment $T_{0}$ of the given system specification $T$.

Variant-based satisfiability \cite{var-sat,var-sat-short} is a recent
theory-generic decision procedure applying to the following, easily
user-specifiable infinite class of equational theories $(\Sigma,E \cup
B)$: (i) $\Sigma$ is an order-sorted \cite{osa1} signature of function
symbols, supporting types, subtypes, and subtype polymorphisms; (ii) $E \cup B$ has the \emph{finite variant property} \cite{comon-delaune}
and $B$ has a finitary unification algorithm; and (iii) $(\Sigma,E
\cup B)$ protects a constructor subtheory $(\Omega,E_{\Omega} \cup
B_{\Omega})$ that is OS-\emph{compact} \cite{var-sat,var-sat-short}.
The procedure can then decide satisfiability in the initial algebra
$T_{\Sigma/E \cup B}$, that is, in the \emph{algebraic data type}
specified by $(\Sigma,E \cup B)$.  These conditions apply to many
user-definable theories, but have a main limitation: they apply well
to \emph{data structures}, but often do \emph{not} hold for
user-definable \emph{predicates}.

The notions of variant and of OS-compactness mentioned above are
defined in detail in Section \ref{osa-prelims}.  Here we give some key
intuitions about each notion.  Given $\Sigma$-equations $E \cup
B$ such that
the equations $E$ oriented
as left-to-right rewrite rules are confluent and terminating
modulo the equational  axioms $B$,  a \emph{variant} of a $\Sigma$-term $t$ is
a pair $(u,\theta)$ where $\theta$ is a substitution, and $u$ is the
canonical form of the term instance $t \theta$ by the rewrite rules
$E$ modulo $B$.  Intuitively, the variants of $t$ are the fully
simplified \emph{patterns} to which the instances of $t$ can reduce.
Some simplified instances are of course more general (as patterns) than others.
 $E \cup B$ has the \emph{finite variant property} (FVP) if any
 $\Sigma$-term $t$ has a \emph{finite} set of most general variants.
For example, the addition equations $E=\{x+0=x,x+s(y)=s(x+y)\}$
are \emph{not} FVP, since $(x+y,\mathit{id})$, $(s(x+y_{1}),
\{y \mapsto s(y_{1})\})$, $(s(s(x+y_{2})),\{y \mapsto s(s(y_{2}))\})$, $\ldots$,
$(s^{n}(x+y_{n}),\{y \mapsto s^{n}(y_{n})\})$, $\ldots$, are all 
\emph{incomparable} variants of $x+y$.
Instead, the Boolean equations $G=\{x\vee \top = \top,x\vee \bot = x,x\wedge \top = x,x\wedge \bot = \bot\}$
\emph{are} FVP.  For example, the most general variants of $x \vee y$ are: 
$(x \vee y,\mathit{id})$, $(x,\{y \mapsto \bot\})$, and $(\top,\{y
\mapsto \top\})$.
Assuming for simplicity that 
all sorts in a theory $(\Omega,E_{\Omega} \cup
B_{\Omega})$ have an infinite number of ground terms of that sort
which are
all different modulo the equations $E_{\Omega} \cup
B_{\Omega}$, then OS-compactness of $(\Omega,E_{\Omega} \cup
B_{\Omega})$ means that any conjunction of disequalities
$\bigwedge_{1 \leq i \leq n} u_{i} \not= v_{i}$ such that
$E_{\Omega} \cup
B_{\Omega} \not\vdash u_{i} =v_{i}$, $1 \leq i \leq n$,
is \emph{satisfiable} in the initial
algebra $T_{\Omega/E_{\Omega} \cup B_{\Omega}}$.  For example, $(\{0,s\},\emptyset)$ is
OS-compact, where $\{0,s\}$ are the usual natural number constructors.
Thus, $s(x) \not= s(y) \wedge 0 \not = y$ \emph{is} satisfiable
in $T_{\{0,s\}}$.

The key reason why user-definable predicates present a serious
obstacle is the following.  Variant satisfiability works by
\emph{reducing} satisfiability in the initial algebra
$T_{\Sigma/E \cup B}$ to satisfiability in the much simpler algebra of
constructors $T_{\Omega/E_{\Omega} \cup B_{\Omega}}$.  In many
applications $E_{\Omega} = \emptyset$, and if the axioms $B_{\Omega}$
are any combination of associativity, commutativity and identity
axioms, except associativity without commutativity, then
$(\Omega,B_{\Omega})$ is an OS-compact theory
\cite{var-sat,var-sat-short}, making satisfiability in
$T_{\Omega/ B_{\Omega}}$ and therefore in $T_{\Sigma/E \cup B}$
decidable.  We can equationally specify a predicate $p$ with sorts
$A_{1},\ldots,A_{n}$ in a \emph{positive} way as a function
$p:A_{1},\ldots,A_{n} \rightarrow \mathit{Pred}$, where the sort
$\mathit{Pred}$ of predicates contains a ``true'' constant
$\mathit{tt}$, so that $p(u_{1},\ldots,u_{n})$ not holding for
concrete ground arguments $u_{1},\ldots,u_{n}$ is expressed as the
\emph{disequality} $p(u_{1},\ldots,u_{n}) \not= \mathit{tt}$.  But
$p(u_{1},\ldots,u_{n}) \not= \mathit{tt}$ means that $p$ must be a
\emph{constructor} of sort $\mathit{Pred}$ in $\Omega$, and that the
equations defining $p$ must belong to $E_{\Omega}$, making
$E_{\Omega} \not= \emptyset$ and ruling out the case when
$T_{\Omega/E_{\Omega} \cup B_{\Omega}} = T_{\Omega/ B_{\Omega}}$ is
decidable by OS-compactness.

This work extends variant-based 
satisfiability  to initial algebras with user-definable predicates under
fairly general conditions using two key ideas: (i)
characterizing the cases when $p(u_{1},\ldots,u_{n}) \not=
\mathit{tt}$ by means of \emph{constrained patterns};
and (ii) eliminating all occurrences of disequalities
of the form $p(v_{1},\ldots,v_{n}) \not=
\mathit{tt}$ in a quantifier-free (QF) formula by means of such 
patterns.  In this way, the QF satisfiability problem
can be reduced to formulas involving only
\emph{non-predicate} constructors, for which OS-compactness 
holds in many applications.  More generally, if some 
predicates fall within the OS-compact fragment, they can be kept.

Preliminaries are in Section \ref{osa-prelims}. Constructor variants and OS-compactness
 in Section \ref{gvars-gvunif}. 
The satisfiability decision procedure is
defined and proved correct in Section
\ref{inductive-validity-with-preds}, and its prototype
implementation is described in Section \ref{implementation}.
Related work and conclusions are discussed in Section
\ref{related-concl}.  All proofs can be found in \cite{gutierrez-meseguer-var-pred-tech-rep}.

\section{Many-Sorted Logic, Rewriting, and Variants} 
\label{osa-prelims}

	We present some preliminaries on many-sorted (MS) logic,
        rewriting and finite variant and variant unification notions needed in the paper.
        For a more general treatment using order-sorted (OS) logic see
        \cite{gutierrez-meseguer-var-pred-tech-rep}.

        We assume familiarity with the following basic concepts and
        notation that are explained in full detail in, e.g.,
        \cite{mg85}: (i) {\em many-sorted (MS) signature\/} as a pair
        $\Sigma = (S,\Sigma)$ with $S$ a set of \emph{sorts} and
        $\Sigma$ an $S^{*} \times S$-indexed family
        $\Sigma = \{\Sigma_{w,s}\}_{(w,s) \in S^{*} \times S}$ of
        function symbols, where $f \in \Sigma_{s_{1} \ldots s_{n},s}$
        is displayed as $f: s_{1} \ldots s_{n} \rightarrow s$; (ii)
        {\em $\Sigma$-algebra\/} $A$ as a pair $A =(A,\_{_{A}})$ with
        $A=\{A_{s}\}_{s \in S}$ an $S$-indexed family of sets, and
        $\_{_{A}}$ a mapping interpreting each
        $f: s_{1} \ldots s_{n} \rightarrow s$ as a function in the set
        $[A_{s_{1}} \times \ldots \times A_{s_{n}}\rightarrow
        A_s]$. (iii) \emph{$\Sigma$-homomorphism\/}
        $h: A \rightarrow B$ as an $S$-indexed family of functions
        $h =\{h_{s}: A_{s} \rightarrow B_{s}\}_{s \in S}$ preserving
        the operations in $\Sigma$; (iv) the term $\Sigma$-algebra
        $T_{\Sigma}$ and its initiality in the category
        ${\bf MSAlg}_{\Sigma}$ of $\Sigma$-algebras when $\Sigma$ is
        unambiguous.
	
	An $S$-sorted set $X=\{X_{s}\}_{s \in S}$ of \emph{variables},
	satisfies $s \not= s' \Rightarrow X_{s}\cap
	X_{s'}=\emptyset$, and the variables in $X$ are always assumed
	\emph{disjoint} from all constants in $\Sigma$.  The $\Sigma$-\emph{term
		algebra} on variables $X$,
	$T_{\Sigma}(X)$, is the \emph{initial algebra} for the signature
	$\Sigma(X)$ obtained by adding to $\Sigma$ the variables $X$ \emph{as
		extra constants}.  Since a $\Sigma(X)$-algebra is just a pair
	$(A,\alpha)$, with $A$ a $\Sigma$-algebra, and $\alpha$ an
	\emph{interpretation of the constants} in $X$, i.e.,
	an $S$-sorted function $\alpha \in [X \sra A]$, the
	$\Sigma(X)$-initiality of
	$T_{\Sigma}(X)$ means that 
		for each $A \in {\bf MSAlg}_{\Sigma}$ and
		$\alpha \in [X \sra A]$, there exists 
		a unique $\Sigma$-homomorphism, 
		$\_\alpha : T_{\Sigma}(X) \rightarrow A$ extending $\alpha$, i.e.,
		such that for each $s \in S$ and $x \in X_{s}$ we have $x
		\alpha_{s} = \alpha_{s}(x)$.
	 In particular, when $A=T_{\Sigma}(Y)$, an interpretation of
	the constants in $X$, i.e., 
	an $S$-sorted function $\sigma\in[X \sra T_{\Sigma}(Y)]$ is called
	a \emph{substitution}, 
	and its unique homomorphic extension $\_\sigma : T_{\Sigma}(X)
	\rightarrow T_{\Sigma}(Y)$ is also called a substitution.
	Define $\mathit{dom}(\sigma)=\{x \in X \mid x \not= x \sigma\}$,
	and $\mathit{ran}(\sigma)= \bigcup_{x \in \mathit{dom}(\sigma)}
	\mathit{vars}(x \sigma)$. Given variables $Z$, the substitution
	$\sigma|_{Z}$ agrees with $\sigma$ on $Z$ and
	is the identity elsewhere.

       We also assume familiarity with many-sorted first-order logic including: (i) the  first-order language of 
	$\Sigma$-\emph{formulas} for $\Sigma$ a signature (in our case
        $\Sigma$ has only function symbols and the $=$ predicate);
        (ii) given a $\Sigma$-algebra $A$, a formula $\varphi \in \mathit{Form}(\Sigma)$,
	and an assignment $\alpha \in [Y \sra A]$, with
	$Y=\mathit{fvars}(\varphi)$ the free variables of $\varphi$,
	the \emph{satisfaction
		relation} $A,\alpha \models \varphi$;
       (iii) the notions of a formula $\varphi \in
       \mathit{Form}(\Sigma)$ being \emph{valid}, denoted $A \models
       \varphi$, resp. \emph{satisfiable}, in a $\Sigma$-algebra $A$.
      For a subsignature $\Omega \subseteq
	\Sigma$ and $A \in {\bf
		MSAlg}_{\Sigma}$, the \emph{reduct} $A|_{\Omega} \in {\bf
		MSAlg}_{\Omega}$ agrees with
	$A$ in the interpretation of all sorts and operations in $\Omega$ and
	discards everything in $\Sigma \setminus \Omega$.
	If $\varphi \in \mathit{Form}(\Omega)$ we have
	the equivalence $A \models \varphi \; \Leftrightarrow \;
        A|_{\Omega} \models \varphi$.

	An MS \emph{equational theory}
	is a pair $T=(\Sigma,E)$, with $E$ a set of  $\Sigma$-equations.
	${\bf MSAlg}_{(\Sigma,E)}$ denotes the full subcategory
	of ${\bf MSAlg}_{\Sigma}$ with objects those $A \in {\bf
		MSAlg}_{\Sigma}$ such that $A \models E$, called the $(\Sigma,E)$-\emph{algebras}.
	${\bf MSAlg}_{(\Sigma,E)}$ has an
	\emph{initial algebra} $T_{\Sigma/E}$ \cite{mg85}.
	The inference system in \cite{mg85} is
	\emph{sound and complete} for MS  equational deduction, i.e., for
	any MS equational theory
	$(\Sigma,E)$, and $\Sigma$-equation $u=v$
	we have an equivalence $E \vdash u=v \; \Leftrightarrow \; E
	\models u=v$. For the sake of simpler inference we assume
        \emph{non-empty sorts}, i.e., $\forall s \in S,\; T_{\Sigma},s \not= \emptyset$.
       Deducibility $E \vdash u=v$ is abbreviated 
	as $u =_{E} v$.

In the above notions there is only an \emph{apparent} lack of
predicate symbols: full many-sorted first-order logic can be \emph{reduced}
to many-sorted algebra and the above language of equational
formulas. 
The reduction is achieved as follows. 
A many-sorted first-order (MS-FO) signature, is a pair $(\Sigma,\Pi)$
with $\Sigma$ a MS signature
with set of sorts $S$, and $\Pi$ an $S^{\ast}$-indexed set 
$\Pi=\{\Pi_{w}\}_{w \in S^{\ast}}$ of \emph{predicate symbols}. 
 We associate to
a MS-FO signature  $(\Sigma,\Pi)$ a MS  signature $(\Sigma \cup \Pi)$
by adding to $\Sigma$ a new sort
$\mathit{Pred}$ with a constant $\mathit{tt}$ and viewing each $p \in \Pi_{w}$ as a function
symbol $p:s_{1}\ldots s_{n} \rightarrow \mathit{Pred}$.
The reduction at the model level is now very simple:
each $(\Sigma \cup \Pi)$-algebra $A$ defines a $(\Sigma,\Pi)$-model
$A^{\circ}$ with $\Sigma$-algebra structure $A|_{\Sigma}$
and having for each  $p \in \Pi_{w}$ the predicate interpretation
$A^{\circ}_{p} =A^{-1}_{p:w \rightarrow \mathit{Pred}}(\mathit{tt})$.
The reduction at the formula level is also quite simple:
we map a $(\Sigma,\Pi)$-formula $\varphi$ to an 
equational formula $\widetilde{\varphi}$, called its \emph{equational version},
by just replacing each atom $p(t_{1},\ldots,t_{n})$
by the equational atom $p(t_{1},\ldots,t_{n}) = \mathit{tt}$.
The \emph{correctness} of this reduction is just the
easy to check equivalence:
\[A^{\circ} \models \varphi \; \Leftrightarrow \; A \models
\widetilde{\varphi}.\]
A MS-FO \emph{theory} is just a pair $((\Sigma,\Pi),\Gamma)$,
with $(\Sigma,\Pi)$ a MS-FO signature and
$\Gamma$ a set of $(\Sigma,\Pi)$-formulas.
Call $((\Sigma,\Pi),\Gamma)$ \emph{equational}
iff $(\Sigma \cup \Pi,\widetilde{\Gamma})$  is a many-sorted equational theory.
By the above equivalence and the completeness of many-sorted equational logic
such theories allow a sound and complete use of equational deduction also
with predicate atoms. Note that if  $((\Sigma,\Pi),\Gamma)$
is equational, it is a very simple type of theory in many-sorted Horn
Logic with Equality and therefore has an initial
model $T_{(\Sigma,\Pi),\Gamma}$ \cite{tapsoft87}.
A useful, easy to check fact is that  we have 
an identity: $T^{\circ}_{\Sigma \cup \Pi/ \widetilde \Gamma} =
T_{(\Sigma,\Pi),\Gamma}$.

Recall the notation for
term positions, subterms, and term replacement
from \cite{dershowitz-jouannaud}: (i) positions in a term viewed as a tree are
marked by strings $p \in \mathbb{N}^{*}$ specifying
a path from the root, (ii)  $t|_{p}$ denotes the
subterm of term $t$ at position $p$, and (iii) $t[u]_{p}$ denotes the result
of \emph{replacing} subterm $t|_{p}$ at position $p$ by $u$.

\begin{definition}  A \emph{rewrite theory} is a triple
$\mathcal{R}=(\Sigma,B,R)$ with $(\Sigma,B)$ a MS equational
theory and $R$ a set of $\Sigma$-\emph{rewrite rules}, i.e., 
sequents $l \rightarrow r$, with
$l,r \in T_{\Sigma}(X)_{s}$ for some $s \in S$.
In what follows it is always assumed that:
(1)
 For each $l \rightarrow r
\in R$, $l \not\in X$ and $\mathit{vars}(r) \subseteq
\mathit{vars}(l)$.
(2)  Each equation $u=v \in B$ is \emph{regular}, i.e., $\mathit{vars}(u) =
\mathit{vars}(v)$, and \emph{linear}, i.e., there are no repeated variables 
in either $u$ or $v$.
The \emph{one-step}
  $R,B$-{rewrite relation} $t \rightarrow_{R,B} t'$, holds between
  $t,t'\in T_{\Sigma}(X)_{s}$, $s\in S$, iff there is a
  rewrite rule $l\rightarrow r \in R$, a substitution
  $\sigma \in [X \sra T_{\Sigma}(X)]$, and a term position $p$ in $t$
  such that $t|_{p}=_{B} l \sigma$, and $t'=t[r \sigma]_{p}$.

$\mathcal{R}$ is called: (i) \emph{terminating} iff the relation
$\rightarrow_{R,B}$ is well-founded; (ii) \emph{strictly}
$B$-\emph{coherent} \cite{DBLP:journals/tcs/Meseguer17}
iff whenever $u \rightarrow_{R,B} v$ and $u
=_{B}u'$ there is a $v'$ such that 
$u' \rightarrow_{R,B} v'$ and $v =_{B}v'$;
 (iii) \emph{confluent}
iff $u \rightarrow^{*}_{R,B} v_1$ and $u \rightarrow^{*}_{R,B} v_2$
imply that there are $w_{1},w_{2}$ such that
$v_1 \rightarrow^{*}_{R,B} w_1$, $v_2 \rightarrow^{*}_{R,B} w_2$,
and $w_{1} =_{B}w_{2}$ (where $\rightarrow^{*}_{R,B}$ denotes  the
reflexive-transitive closure of $\rightarrow_{R,B}$); and (iv)
\emph{convergent} if (i)--(iii) hold.
If $\mathcal{R}$ is convergent, for each $\Sigma$-term $t$ 
there is a term $u$ such that $t \rightarrow^{*}_{R,B}u$
and $(\not\exists v)\; u \rightarrow_{R,B}v$.   We then
write $u=t!_{R,B}$ and
$t \rightarrow!_{R,B} t!_{R,B}$, and call $t!_{R,B}$ the
$R,B$-\emph{normal form} of $t$, which, by confluence, 
is unique up to $B$-equality.
\end{definition}

Given a set $E$ of $\Sigma$-equations, let
$R(E)=\{u \rightarrow v \mid u=v \in E\}$. A \emph{decomposition} of
a  MS equational theory $(\Sigma,E)$ is a convergent rewrite
theory $\mathcal{R}=(\Sigma,B,R)$ such that $E = E_{0} \uplus B$ and $R=R(E_{0})$.
The key property of a decomposition is the following:

\begin{theorem} (Church-Rosser Theorem)
  \cite{jouannaud-hkirchner,DBLP:journals/tcs/Meseguer17} 
Let $\mathcal{R}=(\Sigma,B,R)$ be a decomposition of $(\Sigma,E)$.
Then we have an equivalence:
\[E \vdash u = v \;\; \Leftrightarrow \;\;  u!_{R,B} =_{B} v!_{R,B}.\]
\end{theorem}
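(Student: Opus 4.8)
The plan is to prove the two implications separately, with the right-to-left direction being routine and the left-to-right direction requiring the real work.

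For the $(\Leftarrow)$ direction I would argue that rewriting is sound for $E$-deduction. Since $\mathcal{R}$ is a decomposition, $R=R(E_{0})$ with $E_{0}\subseteq E$ and $B\subseteq E$, so every one-step rewrite $t \rightarrow_{R,B} t'$ is justified by $E \vdash t = t'$: the $B$-match $t|_{p}=_{B} l\sigma$ gives $E \vdash t|_{p} = l\sigma$, the rule $l\rightarrow r$ comes from an axiom $l=r \in E_{0}\subseteq E$ so $E \vdash l\sigma = r\sigma$, and congruence together with transitivity yields $E \vdash t = t'$. Iterating along $u \rightarrow^{*}_{R,B} u!_{R,B}$ and $v \rightarrow^{*}_{R,B} v!_{R,B}$ gives $E \vdash u = u!_{R,B}$ and $E \vdash v = v!_{R,B}$; finally $u!_{R,B} =_{B} v!_{R,B}$ together with $B \subseteq E$ gives $E \vdash u!_{R,B} = v!_{R,B}$, and symmetry with transitivity close the argument to $E \vdash u = v$.

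For the $(\Rightarrow)$ direction I would use completeness of equational deduction in the concrete form that $E \vdash u = v$ holds iff there is a finite chain $u = w_{0} \leftrightarrow_{E} w_{1} \leftrightarrow_{E} \cdots \leftrightarrow_{E} w_{k} = v$ of single replacement steps, each rewriting a subterm $w_{i}|_{p} = l\sigma$ to $w_{i}[r\sigma]_{p}$ for some axiom $l = r \in E$ used in either orientation. The goal is to show, by induction on $k$, that $u!_{R,B} =_{B} v!_{R,B}$; since $=_{B}$ is transitive it suffices to establish $w!_{R,B} =_{B} w'!_{R,B}$ for a single step $w \leftrightarrow_{E} w'$. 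There are two cases. If the axiom used lies in $B$, then $w =_{B} w'$ (a $B$-replacement in context), and I invoke the coherence lemma below. If the axiom lies in $E_{0}$, then the replacement is a one-step $R,B$-rewrite in one of the two directions, so $w$ and $w'$ rewrite to a common form modulo $B$ by confluence, whence $w!_{R,B} =_{B} w'!_{R,B}$ by uniqueness of normal forms.

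The crux, and the step I expect to be the main obstacle, is the coherence lemma: if $s =_{B} t$ then $s!_{R,B} =_{B} t!_{R,B}$. I would prove it by lifting the reduction $s \rightarrow!_{R,B} s!_{R,B}$ along $=_{B}$ using strict $B$-coherence step by step: starting from $s =_{B} t$, each step $s_{i} \rightarrow_{R,B} s_{i+1}$ with $s_{i} =_{B} t_{i}$ produces, by strict coherence, a matching step $t_{i} \rightarrow_{R,B} t_{i+1}$ with $s_{i+1} =_{B} t_{i+1}$. This yields $t \rightarrow^{*}_{R,B} t'$ with $t' =_{B} s!_{R,B}$. Strict coherence also shows that $=_{B}$ preserves irreducibility (a rewrite of $t'$ would, through $t' =_{B} s!_{R,B}$, induce a rewrite of the normal form $s!_{R,B}$), so $t'$ is itself a normal form of $t$ and hence $t' =_{B} t!_{R,B}$ by confluence; combining gives $s!_{R,B} =_{B} t!_{R,B}$. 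The delicate points are precisely the repeated use of strict $B$-coherence to transport the whole rewrite sequence across $=_{B}$ and to guarantee that $=_{B}$ does not turn a normal form into a reducible term, so that confluence can legitimately be applied to conclude uniqueness of the reached normal form.
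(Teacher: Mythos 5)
The paper does not prove this theorem --- it is quoted from the literature (Jouannaud--Kirchner, Meseguer) --- and your argument is precisely the standard proof given there: soundness of $R,B$-rewriting for the $(\Leftarrow)$ direction, and for $(\Rightarrow)$ an induction over the Birkhoff replacement chain, splitting steps into $B$-steps and $E_{0}$-steps, with strict $B$-coherence used exactly where it is needed to transport rewrite sequences across $=_{B}$ and to show that $=_{B}$ preserves irreducibility. Your proof is correct, and you have rightly identified the coherence lemma as the crux; no gaps.
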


If $\mathcal{R}=(\Sigma,B,R)$ is a decomposition of $(\Sigma,E)$, 
and $X$ an $S$-sorted set of variables, the
\emph{canonical term algebra} $C_{\mathcal{R}}(X)$ has
$C_{\mathcal{R}}(X)_{s}= \{ [t!_{R,B}]_{B} \mid t \in
T_{\Sigma}(X)_{s}\}$,
and interprets each $f:s_{1} \ldots s_{n} \rightarrow s$ as the
function
$C_{\mathcal{R}}(X)_{f}: ([u_1]_{B}, \ldots, [u_n]_{B}) \mapsto [f(u_1,
\ldots,u_n)!_{R,B}]_{B}$.  By the Church-Rosser Theorem we then
have an isomorphism $h:T_{\Sigma/E}(X) \cong C_{\mathcal{R}}(X)$, where
$h: [t]_{E} \mapsto [t!_{R,B}]_{B}$.  
In particular, when $X$ is the empty family of variables, the
canonical term algebra $C_{\mathcal{R}}$ is an initial algebra, and
 is the most intuitive possible model for $T_{\Sigma/E}$ as an algebra of
\emph{values} computed by $R,B$-simplification. 

Quite often, the signature $\Sigma$ on which $T_{\Sigma/E}$ is defined
has a natural decomposition as a
disjoint union $\Sigma = \Omega \uplus \Delta$, where
the elements of $C_{\mathcal{R}}$, that is, the 
\emph{values} computed by $R,B$-simplification, are $\Omega$-terms,
whereas the function symbols $f \in  \Delta$ are viewed as
\emph{defined functions} which are
\emph{evaluated away} by $R,B$-simplification.
$\Omega$ (with same poset of sorts as $\Sigma$)
is then called a \emph{constructor subsignature} of $\Sigma$.
Call a decomposition $\mathcal{R}=(\Sigma,B,R)$
of   $(\Sigma,E)$ 
\emph{sufficiently complete} with
respect to the \emph{constructor subsignature} $\Omega$
iff  for each $t \in T_{\Sigma}$ we
have: (i) $t!_{R,B} \in T_{\Omega}$, and (ii) if $u \in T_{\Omega}$
and $u =_{B} v$, then $v \in T_{\Omega}$.  This
ensures that for each $[u]_{B} \in C_{\mathcal{R}}$
we have $[u]_{B} \subseteq T_{\Omega}$.
%
%
We will give several examples of
decompositions  $\Sigma = \Omega \uplus \Delta$ into constructors
and defined functions.

As we can see in the following definition, sufficient completeness is
closely related to the notion of a \emph{protecting} theory inclusion.
%

\begin{definition}  \label{protecting}
 An equational
theory $(\Sigma,E)$ 
\emph{protects} 
another theory $(\Omega,E_{\Omega})$  iff
$(\Omega,E_{\Omega})\subseteq (\Sigma,E)$ 
and the unique $\Omega$-homomorphism
$h:T_{\Omega / E_{\Omega}} \rightarrow T_{\Sigma/E} |_{\Omega}$ is
an isomorphism $h:T_{\Omega / E_{\Omega}} \cong T_{\Sigma/E}
|_{\Omega}$.
A decomposition $\mathcal{R}=(\Sigma,B,R)$ 
\emph{protects} 
another decomposition
$\mathcal{R}_{0}=(\Sigma_{0},B_{0},R_{0})$ 
iff $\mathcal{R}_{0}\subseteq \mathcal{R}$, i.e., $\Sigma_{0}
\subseteq \Sigma$, $B_{0} \subseteq B$, and $R_{0} \subseteq R$, and
for all  $t,t' \in T_{\Sigma_{0}}(X)$ we have:
(i) $t=_{B_0} t' \Leftrightarrow t=_{B} t'$, (ii) $t=t!_{R_{0},B_{0}}
\Leftrightarrow t=t!_{R,B}$, and (iii)
$C_{\mathcal{R}_{0}}=C_{\mathcal{R}}|_{\Sigma_{0}}$.

$\mathcal{R}_{\Omega}=(\Omega,B_{\Omega},R_{\Omega})$ is a
\emph{constructor decomposition} of $\mathcal{R}=(\Sigma,B,R)$ iff
$\mathcal{R}$ protects $\mathcal{R}_{\Omega}$ and $\Sigma$ and
$\Omega$ have the same poset of sorts, so that by (iii) above
$\mathcal{R}$ is sufficiently complete with respect to
$\Omega$.  Furthermore, $\Omega$ is called a subsignature of \emph{free
  constructors modulo} $B_{\Omega}$ iff $R_{\Omega} = \emptyset$, so
that $C_{\mathcal{R}_{\Omega}}=T_{\Omega/B_{\Omega}}$.
\end{definition}

The case where all constructor terms are in $R,B$-normal
form is captured by $\Omega$ being a subsignature
of free constructors modulo $B_{\Omega}$.  Note also that conditions
(i) and (ii) are, so called, ``no confusion'' conditions, and
for protecting extensions
(iii) is a ``no junk'' condition, that is, $\mathcal{R}$ does not
add new data to $C_{\mathcal{R}_{0}}$.

Given a MS equational theory $(\Sigma,E)$ and a conjunction 
of
$\Sigma$-equations 
$\phi=u_{1}=v_{1} \, \wedge \, \ldots \, \wedge \,  u_{n}=v_{n}$, an $E$-\emph{unifier} of $\phi$
is a substitution $\sigma$ such that $u_{i} \sigma=_{E} v_{i}\sigma$,
$1 \leq i \leq n$.  An $E$-\emph{unification algorithm} for
$(\Sigma,E)$ 
is an algorithm generating for each system of $\Sigma$-equations $\phi$
and finite set of variables
 $W \supseteq \mathit{vars}(\phi)$ a \emph{complete set} of $E$-unifiers
$\mathit{Unif}^{W}_{E}(\phi)$ where each $\tau \in
\mathit{Unif}^{W}_{E}(\phi)$ is assumed idempotent and with
$\mathit{dom}(\tau)=\mathit{vars}(\phi)$, and is
``away from $W$'' in the sense that $ran(\tau) \cap W = \emptyset$.
The set $\mathit{Unif}^{W}_{E}(\phi)$ is called ``complete'' in
the precise sense that for any $E$-unifier $\sigma$ of $\phi$ there
is a $\tau \in \mathit{Unif}_{E}(\phi)$ and a substitution
$\rho$ such that $\sigma|_{W} =_{E} (\tau
\rho)|_{W}$, where, by definition, $\alpha=_{E}\beta$ means
$(\forall x \in X)\; \alpha(x) =_{E} \beta(x)$ for 
substitutions $\alpha,\beta$.
Such an algorithm is called \emph{finitary} if it  always terminates
with a \emph{finite set} $\mathit{Unif}^{W}_{E}(\phi)$ for any $\phi$.

The notion of \emph{variant} answers, in a sense, 
two questions: (i) how can we best describe symbolically
the elements of $C_{\mathcal{R}}(X)$ that are \emph{reduced
substitution instances} of a
\emph{pattern term} $t$? and (ii) given an original pattern $t$, how many
other patterns do we need to describe the reduced instances of $t$ in
$C_{\mathcal{R}}(X)$?

\begin{definition} \label{variant-defn}
Given a decomposition $\mathcal{R}=(\Sigma,B,R)$ of a MS
equational theory $(\Sigma,E)$ and a $\Sigma$-term $t$, a
\emph{variant}\footnote{For a discussion of similar
  but not exactly equivalent versions of the variant notion see
  \cite{variants-of-variants}.  Here we follow the shaper formulation in
  \cite{variant-JLAP}, rather than the one in \cite{comon-delaune},
  because it is technically essential for some results to hold \cite{variants-of-variants}.}  
\cite{comon-delaune,variant-JLAP}  of $t$ is a pair
$(u,\theta)$ such that: (i) $u =_{B} (t \theta)!_{R,B}$, (ii)
$\mathit{dom}(\theta) \subseteq \mathit{vars}(t)$, and (iii)
$\theta = \theta !_{R,B}$, that is,
$\theta(x) = \theta(x) !_{R,B}$ for all variables $x$.
$(u,\theta)$ is called a \emph{ground variant} iff, furthermore, $u
\in T_{\Sigma}$.  
%
%
Given variants $(u,\theta)$ and $(v,\gamma)$ of $t$, $(u,\theta)$ is
called \emph{more general} than $(v,\gamma)$, denoted
$(u,\theta) \sqsupseteq_{B} (v,\gamma)$, iff there is a substitution
$\rho$ such that: (i) $(\theta \rho)|_{\mathit{vars}(t)} =_{B} \gamma$, and (ii)
$u \rho =_{B} v$.  Let
$\llbracket t \rrbracket_{R,B}=\{(u_{i},\theta_{i}) \mid i \in I\}$
denote a \emph{complete set of variants} of $t$, that is,
a set of variants such that for any variant $(v,\gamma)$ of $t$
there is an $i \in I$, such that
$(u_{i},\theta_{i}) \sqsupseteq_{B} (v,\gamma)$.

A decomposition $\mathcal{R}=(\Sigma,B,R)$ of $(\Sigma,E)$ has the
\emph{finite variant property} \cite{comon-delaune} (FVP) iff for each
$\Sigma$-term $t$ there is a \emph{finite} complete set
of variants
$\llbracket t \rrbracket_{R,B}=\{(u_{1},\theta_{1}),
\ldots,(u_{n},\theta_{n})\}$.  If $B$ has a finitary
$B$-unification algorithm the relation $(u,\alpha) \sqsupseteq_{B}
(v,\beta)$ is decidable by $B$-matching.  Under this assumption on $B$,
if  $\mathcal{R}=(\Sigma,B,R)$  is FVP,
$\llbracket t \rrbracket_{R,B}$ can be chosen to be not only
complete, but also a set of \emph{most general} variants,
in the sense that  for
$1 \leq i < j \leq n$,
$(u_i,\theta_i) \not\sqsupseteq_{B}
(u_j,\theta_j) \; \wedge \; (u_j,\theta_j) \not\sqsupseteq_{B}
(u_i,\theta_i)$. Also, given any finite set of variables
$W \supseteq \mathit{vars}(t)$ we can always choose
$\llbracket t \rrbracket_{R,B}$ to be of the form $\llbracket t
\rrbracket^{W}_{R,B}$, where each $(u_{i},\theta_{i}) \in \llbracket t
\rrbracket^{W}_{R,B}$ has $\theta_{i}$ idempotent with
$\mathit{dom}(\theta_{i}) = \mathit{vars}(t)$, and ``away from $W$,''
in the sense that $\mathit{ran}(\theta_{i}) \cap W = \emptyset$.
%
\end{definition}

If $B$ has a finitary unification algorithm, the \emph{folding variant
  narrowing} strategy described in~\cite{variant-JLAP} provides an
effective method to generate $\llbracket t \rrbracket_{R,B}$.
Furthermore, folding variant narrowing \emph{terminates} for each
input $t \in T_{\Sigma}(X)$ with a finite set
$\llbracket t \rrbracket_{R,B}$ iff $\mathcal{R}$ has FVP
\cite{variant-JLAP}.

Two example theories, one FVP and another not FVP, were
given in the Introduction.  Many other examples are given in \cite{var-sat}.
 The following will be used as a running example of an FVP theory:

\begin{example} \label{nat-set-fvp-example}
(Sets of Natural Numbers). Let $\mathit{NatSet}=(\Sigma,B,R)$ be the
following equational theory.  $\Sigma$ has sorts $\mathit{Nat}$,
$\mathit{NatSet}$ and $\mathit{Pred}$, subsort inclusion\footnote{As pointed out at the beginning of
          Section \ref{osa-prelims}, \cite{gutierrez-meseguer-var-pred-tech-rep}  treats the
          more general \emph{order-sorted} case, where sorts form
          a poset $(S,\leq)$ with $s \leq s'$ interpreted as set
          containment $A_{s} \subseteq A_{s'}$ in a $\Sigma$-algebra
          $A$. All results in this paper hold in the order-sorted case.}
$\mathit{Nat} < \mathit{NatSet}$, and
decomposes as
$\Sigma = \Omega_{c} \uplus \Delta$,
where the constructors $\Omega_{c}$ include the following operators:
$0$ and $1$ of sort $\mathit{Nat}$,
$\_+\_: \mathit{Nat}\,\mathit{Nat} \rightarrow \mathit{Nat}$ (addition),
$\emptyset$ of sort $\mathit{NatSet}$,
$\_,\_: \mathit{NatSet}\,\mathit{NatSet} \rightarrow \mathit{NatSet}$
(set union),
$tt$ of sort $\mathit{Pred}$, and a subset containment
predicate expressed as a function 
$\_\subseteq \_ : \mathit{NatSet}\,\mathit{NatSet} \rightarrow
\mathit{Pred}$. $B$ decomposes as $B =B_{\Omega_{c}} \uplus B_{\Delta}$.
The axioms $B_{\Omega_{c}}$ include: (i) the associativity and
commutativity of $\_+\_$ with identity $0$, the associativity and
commutativity of $\_,\_$.  $R$ decomposes as
$R =R_{\Omega_{c}} \uplus R_{\Delta}$.
The rules $R_{\Omega_{c}}$ include:
(i) an identity rule for union $NS,\emptyset \rightarrow NS$;
(ii) idempotency rules for union $NS,NS \rightarrow NS$, and
$NS,NS,NS' \rightarrow NS,NS'$; and (iii) rules
defining the $\_\subseteq \_$ predicate, 
$\emptyset \subseteq NS \rightarrow tt$,
$NS \subseteq NS \rightarrow tt$, and
$NS \subseteq NS,NS' \rightarrow tt$, 
where $NS$ and $NS'$ have sort $\mathit{NatSet}$.
The signature $\Delta$ of defined functions 
has operators $\mathit{max}: \mathit{Nat}\,\mathit{Nat} \rightarrow \mathit{Nat}$,
$\mathit{min}: \mathit{Nat}\,\mathit{Nat} \rightarrow \mathit{Nat}$, and
$\_\dotdiv \_ : \mathit{Nat}\,\mathit{Nat} \rightarrow \mathit{Nat}$,
for the maximum, minimum and  ``monus'' (subtraction)
functions. The axioms $B_{\Delta}$ are the commutativity of the $\mathit{max}$
and $\mathit{min}$ functions.  The rules $R_{\Delta}$ for the defined
functions are: $\mathit{max}(N,N + M) \rightarrow N + M$,
$\mathit{min}(N,N + M) \rightarrow N$,
$N \dotdiv (N + M) \rightarrow 0$, and
$(N + M) \dotdiv N \rightarrow M$, where $N$ and $M$ have sort
$\mathit{Nat}$.

The predicates $\in$ and $\subset$ need not be explicitly defined, since they
can be expressed by the definitional equivalences $N \in NS = \mathit{tt}  \,
\Leftrightarrow N,NS = NS$, and
 $NS \subset NS' = \mathit{tt}  \,
\Leftrightarrow NS \subseteq NS'  = \mathit{tt} \, \wedge \, NS \not= NS'$.
\end{example}

FVP is a \emph{semi-decidable} property \cite{variants-of-variants}, which can be easily
verified (when it holds) by checking, using folding variant narrowing
(supported by Maude 2.7),
that for each function symbol $f: s_{1} \ldots s_{n} \rightarrow s$ the term $f(x_{1},\ldots,x_{n})$,
with $x_{i}$ of sort $s_{i}$, $1 \leq i \leq n$, has a finite
number of most general variants.  Given
an FVP decomposition $\mathcal{R}$ its \emph{variant complexity}
is the total number $n$ of variants for all such
$f(x_{1},\ldots,x_{n})$, provided
$f$ has some associated
rules of the form $f(t_1,\ldots,t_n) \rightarrow t'$.  This
gives a \emph{rough} measure of how costly it is to perform
variant computations \emph{relative} to the cost of performing
$B$-unification.  For example, the variant complexity
of $\mathit{NatSet}$ above is 20.

To be able to express 
systems of equations, say, $u_{1}=v_{1} \, \wedge \, \ldots \, \wedge \,  u_{n}=v_{n}$,
as \emph{terms}, we can extend  an MS signature $\Sigma$
with sorts $S$ to an OS signature $\Sigma^{\wedge}$
by: (1)
adding to $S$
  fresh new sorts 
$\mathit{Lit}$ and $\mathit{Conj}$
with a subsort inclusion
$\mathit{Lit} < \mathit{Conj}$; (2) adding a binary
conjunction operator $\_ \wedge\_ : \mathit{Lit} \; \mathit{Conj}
\rightarrow \mathit{Conj}$; and 
(3) adding for each $s \in S$  binary operators
 $\_=\_ : s \; s \rightarrow \mathit{Lit}$
and $\_ \not= \_ : s \; s \rightarrow \mathit{Lit}$.

Variant-based unification goes back to \cite{variant-JLAP}.
The paper \cite{var-sat}
gives a more precise characterization using $\Sigma^{\wedge}$-terms
as follows. If  $\mathcal{R}=(\Sigma,B,R)$ is an FVP decomposition of
$(\Sigma,E)$ and $B$ has a finitary $B$-unification
 algorithm, given a system of $\Sigma$-equations $\phi$ with variables
 $W$,
 folding variant narrowing
computes a \emph{finite} set $\mathit{VarUnif}^{W}_{E}(\phi)$
of $E$-unifiers away from $W$ that is \emph{complete} in the strong
sense that if $\alpha$ is an $R,B$-normalized
$E$-unifier of $\phi$ there exists $\theta
\in \mathit{VarUnif}^{W}_{E}(\phi)$ 
and an $R,B$-\emph{normalized}
$\rho$ such that  $\alpha|_{W} =_{B}
(\theta \rho)|_{W}$.

\vspace{-2ex}

\section{Constructor Variants and OS-Compactness}
\label{gvars-gvunif}

\vspace{-.5ex}

We gather some technical notions and results needed for the
inductive satisfiability procedure given in Section
\ref{inductive-validity-with-preds}.

%
The notion of \emph{constructor variant} answers
 the question: what variants of $t$ cover as instances
 modulo $B_{\Omega}$ all canonical forms of all ground instances of $t$?
The following lemma (stated and proved at the more general
order-sorted level in \cite{gutierrez-meseguer-var-pred-tech-rep}, but stated here for
the MS case for simplicity) gives a precise answer under
reasonable assumptions.  For more on constructor variants see 
\cite{var-sat,MLAVBS-WRLA16,gutierrez-meseguer-var-pred-tech-rep}.

\begin{lemma} \label{non-free-ctror-var-lemma}
 Let $\mathcal{R}=(\Sigma,B,R)$ be an FVP decomposition of $(\Sigma,E)$
protecting a constructor decomposition
$\mathcal{R}_{\Omega}=(\Omega,B_{\Omega},R_{\Omega})$.
Assume that: (i) $\Sigma =\Omega \cup \Delta$ with
$\Omega \cap \Delta = \emptyset$;
 (ii) $B$ has a
finitary $B$-unification algorithm and 
$B = B_{\Omega} \uplus B_{\Delta}$, with $B_{\Omega}$
$\Omega$-equations and
 if $u=v \in B_{\Delta}$,
$u,v$ are non-variable $\Delta$-terms.
%
%
Call $\llbracket t \rrbracket^{\Omega}_{R,B} = \{ (v,\theta) \in   \llbracket t
\rrbracket_{R,B} \mid  v \in T_{\Omega}(X)\}$ the set of
\emph{constructor variants} of $t$.
If $[u] \in \mathcal{C}_{\mathcal{R}_{\Omega}}$
is of the form $u =_{B} (t \gamma)!_{R,B}$, then
there is $(v,\theta) \in \llbracket t
\rrbracket^{\Omega}_{R,B}$ and a normalized ground substitution $\tau$
such that $u =_{B} v \tau$.  
%
%
\end{lemma}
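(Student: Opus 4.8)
The plan is to realize the given ground instance as a variant of $t$, pull back a more general variant from the complete set $\llbracket t \rrbracket_{R,B}$, and then show two things about that more general variant: that its term component is a \emph{constructor} term, and that the witnessing substitution can be taken normalized and ground. First I would manufacture a variant of $t$ whose term component is exactly $u$. Set $\theta_{0} = (\gamma!_{R,B})|_{\mathit{vars}(t)}$. Since $\gamma(x) \rightarrow^{*}_{R,B} \theta_{0}(x)$ for each $x \in \mathit{vars}(t)$, we have $t\gamma \rightarrow^{*}_{R,B} t\theta_{0}$, so by confluence $(t\gamma)!_{R,B} =_{B} (t\theta_{0})!_{R,B}$ and hence $u =_{B} (t\theta_{0})!_{R,B}$. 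Together with $\mathit{dom}(\theta_{0}) \subseteq \mathit{vars}(t)$ and $\theta_{0} = \theta_{0}!_{R,B}$, this shows $(u,\theta_{0})$ is a variant of $t$. Completeness of $\llbracket t \rrbracket_{R,B}$ then yields some $(v,\theta) \in \llbracket t \rrbracket_{R,B}$ with $(v,\theta) \sqsupseteq_{B} (u,\theta_{0})$, i.e., a substitution $\rho$ with $v\rho =_{B} u$ (the companion condition $(\theta\rho)|_{\mathit{vars}(t)} =_{B} \theta_{0}$ will not be needed).

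Next I would verify $v \in T_{\Omega}(X)$, so that in fact $(v,\theta) \in \llbracket t \rrbracket^{\Omega}_{R,B}$. Since $[u] \in \mathcal{C}_{\mathcal{R}_{\Omega}}$, it is an $R_{\Omega},B_{\Omega}$-normal form of a ground $\Omega$-term, hence $u \in T_{\Omega}$ is ground, and by protecting condition (ii) it is also $R,B$-normalized. Regularity of $B$ forces $v\rho$ to be ground, its variable set being that of $u$, namely empty; thus $\rho$ is ground on $\mathit{vars}(v)$, and sufficient completeness condition (ii) gives $v\rho \in T_{\Omega}$. Now if $v$ carried a symbol $f \in \Delta$ at some position $p$, then, because no variable of $v$ can lie on the path above $p$, that same $f$ would occur at $p$ in $v\rho$; using $\Omega \cap \Delta = \emptyset$ this contradicts $v\rho \in T_{\Omega}$. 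Hence $v$ is a constructor term.

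Finally I would extract the normalized ground witness. Put $\tau = \rho|_{\mathit{vars}(v)}$, which is ground by the above and satisfies $v\tau =_{B} u$. The key observation is that $v\tau$ is already $R,B$-normalized: since $u$ is a normal form and $v\tau =_{B} u$, strict $B$-coherence rules out any redex in $v\tau$, as a step $v\tau \rightarrow_{R,B} w$ would force a corresponding step out of $u$. Consequently every $\tau(x)$, being a subterm of the normalized term $v\tau$, is itself normalized, so $\tau$ is a normalized ground substitution with $u =_{B} v\tau$, as required.

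I expect the main obstacle to be the constructor-term step, i.e., proving $v \in T_{\Omega}(X)$ rather than merely $v\rho \in T_{\Omega}$. The crux is that $B$-equality to a ground constructor term cannot conceal a defined symbol, which is precisely what sufficient completeness condition (ii) and the disjointness $\Omega \cap \Delta = \emptyset$ provide, combined with the syntactic fact that substitution preserves the non-variable skeleton of $v$. The normalization of $\tau$ via strict $B$-coherence is the other delicate point, since $v\tau$ need not be normal \emph{a priori} merely because $v$ is a constructor term and $\tau$ is normalized.
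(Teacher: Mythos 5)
Your proof is correct, and its skeleton --- realize $u$ as a variant of $t$, pull back a member of the complete set $\llbracket t \rrbracket_{R,B}$, then show that its term component lies in $T_{\Omega}(X)$ --- is exactly the paper's. The interesting divergence is in how the two delicate steps are discharged. For the constructor step, the paper argues directly on the $B$-equality proof between $u$ and $v\tau$: using that all axioms are regular and linear, that $\Omega \cap \Delta = \emptyset$, and that every equation in $B_{\Delta}$ relates non-variable $\Delta$-terms, it shows that each $B_{\Omega}$- or $B_{\Delta}$-step preserves the property of \emph{not} being an $\Omega$-term, so a $\Delta$-symbol in $v$ (hence in $v\tau$) would propagate to $u$, contradicting $u \in T_{\Omega}$. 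You instead invoke sufficient-completeness condition (ii) (available because $\mathcal{R}_{\Omega}$ is a constructor decomposition) to get $v\rho \in T_{\Omega}$ outright, and then use the syntactic fact that substitution preserves the non-variable skeleton of $v$; this is a legitimate shortcut, but note that it leans on the paper's claim that protection condition (iii) yields sufficient completeness, whereas the lemma's hypothesis (ii) on $B_{\Delta}$ --- which your argument never touches --- is there precisely so that the same property can be derived from the axioms alone. Second, you actually \emph{prove} that the witness $\tau$ is normalized, via strict $B$-coherence ($v\tau =_{B} u$ with $u$ a normal form excludes any redex in $v\tau$, and each $\tau(x)$ with $x \in \mathit{vars}(v)$ is a subterm of $v\tau$); the paper's proof simply asserts that a normalized $\tau$ exists, so your argument closes a small gap left implicit there. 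Both proofs obtain groundness of $\tau$ in the same way, from regularity of $B$ and the groundness of $u$.
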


We finally need the notion of an order-sorted
OS-\emph{compact} equational OS-FO theory $((\Sigma,\Pi),\Gamma)$,
generalizing  the compactness notion in
\cite{DBLP:journals/tcs/Comon93}.  
The notion is the \emph{same} (but called MS-compactness)
 for the special case of
MS theories treated in the preliminaries to simplify the exposition.
It is stated here in the more general OS case because
the satisfiability  algorithm in Section
\ref{inductive-validity-with-preds} works for the more general OS case,
and the paper's examples are in fact OS theories.

Given a OS equational theory
$(\Sigma,E)$, call a $\Sigma$-equality $u = v$
$E$-\emph{trivial} iff $u =_{E} v$, and a
$\Sigma$-disequality $u \not= v$
$E$-\emph{consistent} iff $u \not=_{E} v$.
Likewise, call a conjunction $\bigwedge D$
of $\Sigma$-disequalities $E$-\emph{consistent} iff each $u \not= v$
in $D$ is so. 
Call a sort $s \in S$ \emph{finite} in both $(\Sigma,E)$ and
$T_{\Sigma/E}$ iff
$T_{\Sigma/E,s}$ is a finite set, and \emph{infinite} otherwise.

\begin{definition} \label{compact-theo-defn} An equational OS-FO
  theory $((\Sigma,\Pi),\Gamma)$ is called \emph{OS-compact} iff: (i)
  for each sort $s$ in $\Sigma$ we can effectively determine whether
  $s$ is finite or infinite in
  $T_{\Sigma \cup \Pi/\widetilde{\Gamma},}$, and, if finite, can
  effectively compute a representative ground term
  $\mathit{rep}([u]) \in [u]$ for each
  $[u] \in T_{\Sigma \cup\Pi/\widetilde{\Gamma},s}$; (ii)
  $=_{\widetilde{\Gamma}}$ is decidable and $\widetilde{\Gamma}$ has a
  finitary unification algorithm; and (iii) any finite conjunction
  $\bigwedge D$ of negated $(\Sigma,\Pi)$-atoms whose variables all
  have infinite sorts and such that $\bigwedge \widetilde{D}$ is
  $\widetilde{\Gamma}$-consistent is satisfiable in
  $T_{\Sigma,\Pi,\Gamma}$.

Call an OS theory $(\Sigma,E)$
OS-\emph{compact} iff OS-FO theory  $((\Sigma,\emptyset),E)$ is OS-\emph{compact}.
\end{definition}

The key theorem, generalizing
a similar one in \cite{DBLP:journals/tcs/Comon93} 
is the following:

\begin{theorem} \label{compact-sat} \cite{var-sat,var-sat-short}
If  $((\Sigma,\Pi),\Gamma)$
is an \emph{OS-compact} theory, then satisfiability of QF
$(\Sigma,\Pi)$-formulas in  $T_{\Sigma,\Pi,\Gamma}$ is decidable.
\end{theorem}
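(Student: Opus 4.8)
The plan is to reduce the problem, via the equational encoding of predicates, to deciding satisfiability of conjunctions of equational literals, and then to peel off the positive literals by unification and dispatch the residual disequalities using the three clauses of OS-compactness.

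First I would use the correctness equivalence $A^{\circ} \models \varphi \Leftrightarrow A \models \widetilde{\varphi}$ together with the identity $T^{\circ}_{\Sigma \cup \Pi/\widetilde{\Gamma}} = T_{(\Sigma,\Pi),\Gamma}$ to replace the question ``is the QF $(\Sigma,\Pi)$-formula $\varphi$ satisfiable in $T_{\Sigma,\Pi,\Gamma}$?'' by the equivalent question ``is the QF equational formula $\widetilde{\varphi}$ satisfiable in $T_{\Sigma \cup \Pi/\widetilde{\Gamma}}$?''. Each predicate atom $p(\bar{t})$ thereby becomes the equation $p(\bar{t}) = \mathit{tt}$, so from now on every literal is an equation $u = v$ or a disequation $u \neq v$ over $\Sigma \cup \Pi$. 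Because the initial algebra is reachable and all sorts are non-empty, a satisfying assignment of the free variables may be taken to be a \emph{ground} substitution, so satisfiability of $\widetilde{\varphi}$ amounts to the existence of a ground $\rho$ with $T_{\Sigma \cup \Pi/\widetilde{\Gamma}} \models \widetilde{\varphi}\rho$.

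Second I would put $\widetilde{\varphi}$ in disjunctive normal form; it is satisfiable iff some disjunct is, so it suffices to decide a single conjunction $\bigwedge P \wedge \bigwedge D$, with $P$ a set of equations and $D$ a set of disequations. I would first solve the positive part: by clause (ii) the theory $\widetilde{\Gamma}$ has a finitary unification algorithm, so I can compute a finite complete set $\mathit{Unif}^{W}_{\widetilde{\Gamma}}(\bigwedge P)$ of unifiers away from a suitable finite set $W$ of variables. If this set is empty, no ground substitution satisfies $\bigwedge P$ and the disjunct is unsatisfiable. Otherwise, since any ground solution of the disjunct must in particular be a $\widetilde{\Gamma}$-unifier of $\bigwedge P$, completeness lets me factor it as some $\theta$ in this set followed by a further ground substitution; hence the disjunct is satisfiable iff for some $\theta$ the residual disequation conjunction $\bigwedge D\theta$ is satisfiable by a ground instance. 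This turns the problem into finitely many disequation-only satisfiability questions.

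Third I would decide satisfiability of a pure disequation conjunction $\bigwedge D'$. Using clause (i), I classify each variable of $\bigwedge D'$ as of finite or infinite sort, and I branch over all finitely many assignments of the finite-sort variables to their representatives $\mathit{rep}([u])$; the conjunction is satisfiable iff some branch is. In a fixed branch the residual conjunction has only infinite-sort (or no) variables, and by clause (ii) I can decide, for each remaining disequation $s \neq t$, whether $s =_{\widetilde{\Gamma}} t$: if so the branch is discarded as $\widetilde{\Gamma}$-inconsistent (a provable equality holds in the initial algebra, so $s \neq t$ can never be satisfied), and otherwise the surviving conjunction is $\widetilde{\Gamma}$-consistent with all variables of infinite sort, so clause (iii) guarantees it is satisfiable in $T_{\Sigma,\Pi,\Gamma}$. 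Threading the branch results back up through the unifiers and the DNF yields the decision. The main obstacle I anticipate is the bookkeeping that makes this case analysis both sound and complete, in particular the ``only if'' direction of the third step: a ground satisfying substitution assigns the finite-sort variables values coinciding modulo $\widetilde{\Gamma}$ with chosen representatives, thereby selecting a branch whose residual disequations it still satisfies and which is therefore $\widetilde{\Gamma}$-consistent and retained. The semantic heart---that a $\widetilde{\Gamma}$-consistent conjunction of disequations over infinite sorts is actually solvable---is exactly clause (iii) and hence available by hypothesis; the genuine labor is the reduction machinery (DNF, complete unifier sets, finite-sort enumeration) and verifying that no satisfiability is lost or spuriously introduced along the way.
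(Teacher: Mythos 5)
Your proposal is correct and follows essentially the same route as the proof in the cited source: reduce to the equational version via $\widetilde{\varphi}$, pass to DNF, eliminate the positive part with the finitary $\widetilde{\Gamma}$-unification algorithm of clause (ii), enumerate representatives for finite-sort variables via clause (i), and settle the residual $\widetilde{\Gamma}$-consistent disequality conjunctions over infinite sorts by clause (iii). The points you flag as needing care (factoring the ground solution through a complete unifier only up to $=_{\widetilde{\Gamma}}$, and the soundness/completeness of the finite-sort branching) are exactly the bookkeeping the original proof carries out, and your treatment of them is sound.
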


The following OS-compactness results are proved in detail in
\cite{var-sat}: (i) a free constructor decomposition modulo axioms
$\mathcal{R}_{\Omega}=(\Omega,B_{\Omega},\emptyset)$ for
$B_{\Omega}$ any combination of associativity, commutativity and
identity axioms, except associativity without commutativity, is
OS-compact; and (ii) the constructor decompositions for
 parameterized modules for lists, compact lists,
multisets, sets, and hereditarily finite (HF) sets are all
\emph{OS-compact-preserving}, in the sense that if the actual parameter has an 
OS-compact constructor decomposition, then the corresponding instantiation
of the parameterized constructor decomposition is OS-compact.

\begin{example} \label{nat-set-ctor-OS-compact}
The constructor decomposition
  $\mathcal{R}_{\Omega_{c}}=(\Omega,B_{\Omega_{c}},R_{\Omega_{c}})$
  for the $\mathit{NatSet}$ theory in Example
  \ref{nat-set-fvp-example} is OS-compact.  This follows from the fact
  that $\mathit{NatSet}$ with set containment predicate
  $\_ \subseteq \_$ is just the instantiation of the constructor
  decomposition for the parameterized module of (finite) sets in \cite{var-sat} to the
  natural numbers with $0$, $1$, and $\_+\_$, which is itself a theory of
  free constructors modulo associativity, commutativity and identity
  $0$  for $\_+\_$ and therefore OS-compact by (i), so that,  by (ii),
  $\mathcal{R}_{\Omega_{c}}=(\Omega,B_{\Omega_{c}},R_{\Omega_{c}})$ is also OS-compact.
\end{example}

\vspace{-4ex}

\section{QF Satisfiability in Initial Algebras with Predicates} 
\label{inductive-validity-with-preds}

\vspace{-.5ex}

The known variant-based quantifier-free (QF) satisfiability and validity
results \cite{var-sat,var-sat-short} apply to the initial algebra $T_{\Sigma/E}$
of an equational theory $(\Sigma,E)$ having an FVP
variant-decomposition $\mathcal{R}=(\Sigma,B,R)$ protecting a constructor decomposition
  $\mathcal{R}_{\Omega}=(\Omega,B_{\Omega},R_{\Omega})$ and such that:
(i) $B$ has a finitary unification algorithm; and (ii) the equational
theory of $\mathcal{R}_{\Omega}=(\Omega,B_{\Omega},R_{\Omega})$ is
OS-compact.

\begin{example} QF validity and satisfiability in the initial algebra
  $T_{\Sigma/E}$ for $(\Sigma,E)$ the theory with the $\mathit{NatSet}$  FVP 
variant-decomposition $\mathcal{R}=(\Sigma,B,R)$ in Example \ref{nat-set-fvp-example}
are decidable because its axioms $B$ have a finitary unification
algorithm and, as explained in Example \ref{nat-set-ctor-OS-compact},
its constructor decomposition
$\mathcal{R}_{\Omega}=(\Omega,B_{\Omega},R_{\Omega})$
is OS-compact.
\end{example}

The decidable inductive validity and satisfiability results in
\cite{var-sat,var-sat-short} apply indeed to many \emph{data
  structures} of interest, which may obey structural axioms $B$ such
as commutativity, associativity-commutativity, or identity.  Many useful
examples are given in \cite{var-sat}, and a prototype Maude
implementation is presented in \cite{MLAVBS-WRLA16}.  There is,
however, a main limitation about the range of examples to which these
results apply, which this work directly addresses.  The limitation comes
from the introduction of  \emph{user-definable predicates}.  Recall
that we represent a predicate $p$ with sorts $s_{1},\ldots,s_{n}$
as a function $p:s_{1},\ldots,s_{n} \rightarrow \mathit{Pred}$
defined in the \emph{positive} case by confluent and terminating
 equations $p(u^{i}_{1}, \ldots,u^{i}_{n})=
\mathit{tt}$, $1 \leq i \leq k$.  The key problem with such predicates
$p$ is that, except in trivial cases, there are typically ground terms
$p(v_{1}, \ldots,v_{n})$ for which the predicate does \emph{not} hold.
This means that 
 $p$ must be a \emph{constructor} operator of sort $\mathit{Pred}$
which is \emph{not} a free constructor modulo the axioms $B_{\Omega}$.
This makes proving OS-compactness for a constructor
decomposition $\mathcal{R}_{\Omega}=(\Omega,B_{\Omega},R_{\Omega})$
including user-definable predicates a non-trivial case-by-case
task.  For example, the proofs of
OS-compactness for the set containment predicate
  $\_ \subseteq \_$ in the parameterized module of finite sets and for
other such predicates in other FVP parameterized modules
in \cite{var-sat} all required non-trivial analyses.
Furthermore, OS-compactness may fail for some 
$\mathcal{R}_{\Omega}$ precisely because of predicates (see Example \ref{nat-set-pred-example} below).

\begin{example} \label{nat-set-pred-example}
Consider the following extension by predicates
  $\mathit{NatSetPreds}$ of the $\mathit{NatSet}$ theory in Example
  \ref{nat-set-fvp-example}, where the constructor signature
  $\Omega = \Omega_{c} \uplus \Omega_{\Pi}$ adds the subsignature
  $\Omega_{\Pi}$ containing the strict order predicate
  $\_>\_: \mathit{Nat} \, \mathit{Nat} \rightarrow \mathit{Pred}$, the
  ``sort predicate''
  $\_\!\!:\!\!\mathit{Nat}: \mathit{NatSet} \rightarrow
  \mathit{Pred}$,
  characterizing when a set of natural numbers is a natural, and the
  even and odd predicates
  $\mathit{even}, \mathit{odd}: \mathit{NatSet} \rightarrow
  \mathit{Pred}$,
  defined by the rules $R_{\Pi}$:
  $N + M + 1 > N \rightarrow \mathit{tt}$,
  $N \!\!:\!\!\mathit{Nat} \rightarrow \mathit{tt}$,
  $\mathit{even}(N + N) \rightarrow \mathit{tt}$,
  $\mathit{odd}(N + N + 1) \rightarrow \mathit{tt}$, where $N$ and $M$
  have sort $\mathit{Nat}$.  $\mathit{NatSetPreds}$ is FVP, but its
  constructor decomposition
  $\mathcal{R}_{\Omega}=(\Omega_{c} \uplus
  \Omega_{\Pi},B_{\Omega_{c}},R_{\Omega_{c}} \uplus R_{\Pi})$
  is \emph{not} OS-compact, since the negation of the trichotomy law
  $N > M \vee M > N \vee N=M$ is the $B_{\Omega_{c}}$-consistent but
  \emph{unsatisfiable} conjunction of disequalities
  $N > M \not= \mathit{tt} \wedge M > N \not= \mathit{tt} \wedge N
  \not= M$.
\end{example}

The goal of this work is to provide a decision procedure for validity
and satisfiability of QF formulas in the initial algebra of an FVP
theory $\mathcal{R}$ that may contain user-definable predicates and
protects a constructor decomposition $\mathcal{R}_{\Omega}$ that need
not be OS-compact, under the following reasonable assumptions: (1)
$\mathcal{R} =(\Delta \uplus \Omega_{c} \uplus
\Omega_{\Pi},B_{\Delta} \uplus B_{\Omega_{c}},R_{\Delta} \uplus
R_{\Omega_{c}} \uplus R_{\Pi})$
protects
$\mathcal{R}_{\Omega}=(\Omega_{c} \uplus
\Omega_{\Pi},B_{\Omega_{c}},R_{\Omega_{c}} \uplus R_{\Pi})$,
where $\Omega_{\Pi}$ consists only of predicates, and $R_{\Pi}$
consists of rules of the form
$p(u^{i}_{1}, \ldots,u^{i}_{n}) \rightarrow \mathit{tt}$,
$1 \leq i \leq k_{p}$, defining each $p \in \Omega_{\Pi}$;
furthermore, $\mathcal{R}_{\Omega}$ satisfies conditions (i)--(ii) in Lemma
\ref{non-free-ctror-var-lemma};
(2) $\mathcal{R}_{\Omega_{c}}=(\Omega_{c},B_{\Omega_{c}},R_{\Omega_{c}})$
is OS-compact, its finite sorts (if any) are different from 
$\mathit{Pred}$, and is
the constructor decomposition of $(\Delta \uplus
\Omega_{c},B_{\Delta} \uplus B_{\Omega_{c}},R_{\Delta} \uplus
R_{\Omega_{c}})$;
and (3) each $p \in \Omega_{\Pi}$ has an associated set of
\emph{negative constrained patterns} of the form:
\[ \bigwedge_{1 \leq l \leq n_{j}} {w^{j}}_{l} \not= {w'^{j}}_{l} \,
\Rightarrow p({v^{j}}_{1}, \ldots,{v^{j}}_{n}) \not= \mathit{tt}, \;\; 1
\leq j \leq m_{p}
\]
with the ${v_{i}^{j}}$, ${w^{j}}_{l}$ and ${w'^{j}}_{l}$ 
$\Omega_{c}$-terms with variables in $Y_{j} =\mathit{vars}(p({v^{j}}_{1}, \ldots,{v^{j}}_{n}))$.
These negative constrained patterns are interpreted as meaning that
the following \emph{semantic equivalences} are valid in
$\mathcal{C}_{\mathcal{R}}$ for each $p \in \Omega_{\Pi}$, where
$\rho_{j} \in \{ \rho \in [Y_{j} \sra T_{\Omega_{c}}] \mid
\rho = \rho!_{R,B}\}$, $B=B_{\Delta} \uplus B_{\Omega_{c}}$,
and $R=R_{\Delta} \uplus
R_{\Omega_{c}} \uplus R_{\Pi}$:
\[[p({v^{j}}_{1}, \ldots,{v^{j}}_{n})\rho_{j}] \in
\mathcal{C}_{\mathcal{R}} \Leftrightarrow 
\bigwedge_{1 \leq
  l \leq n_{j}} ({w^{j}}_{l} \not= {w'^{j}}_{l}) \rho_{j}
\]
{\footnotesize
\[
 [p(t_{1}, \ldots,t_{n})] \in
\mathcal{C}_{\mathcal{R}} \Leftrightarrow \exists j \exists \rho_{j} \;
[p(t_{1}, \ldots,t_{n})] = [p({v^{j}}_{1}, \ldots,{v^{j}}_{n})\rho_{j}] \wedge \bigwedge_{1 \leq
  l \leq n_{j}} ({w^{j}}_{l} \not= {w'^{j}}_{l}) \rho_{j}
\]
}
The first equivalence means that any instance of a negative pattern by a normalized
ground substitution $\rho_{j}$ satisfying its
constrain is normalized,
so that $\mathcal{C}_{\mathcal{R}} \models  p({v^{j}}_{1}, \ldots,{v^{j}}_{n})\rho_{j}  \not= \mathit{tt}$.
The second means that $[p(t_{1}, \ldots,t_{n})] \in
C_{\mathcal{R}}$ iff $[p(t_{1}, \ldots,t_{n})]$ instantiates
a negative  pattern satisfying its constraint.

\begin{example}\label{nat-set-pred-patterns}  The module  $\mathit{NatSetPreds}$ from
Example \ref{nat-set-pred-example} satisfies above conditions
(1)--(3).  Indeed, (1), including conditions (i)--(ii) in Lemma
\ref{non-free-ctror-var-lemma}, follows easily from
its definition and that of $\mathit{NatSet}$, and (2)
also follows easily from the definition of  $\mathit{NatSet}$  and the remarks in
Example \ref{nat-set-ctor-OS-compact}. 
This leaves us with condition (3), where the negative constrained
patterns for 
$\Omega_{\Pi} =\{\_>\_,\mathit{even}, \mathit{odd},\_\!:\!\!\mathit{Nat} \}$ are the following:
\begin{itemize}
\item $N > N + M \not= \mathit{tt}$

\item $\mathit{even}(N + N + 1) \not= \mathit{tt}$, $\mathit{even}(\emptyset) \not= \mathit{tt}$,
 $(N \subseteq NS \not= \mathit{tt} \, \wedge \,  NS \not=
\emptyset) \Rightarrow \mathit{even}(N,NS) \not= \mathit{tt}$

\item $\mathit{odd}(N + N) \! \not= \! \mathit{tt}$,  $\mathit{odd}(\emptyset) \not= \mathit{tt}$,
$(N \! \subseteq \! NS \not= \mathit{tt} \wedge NS \not=
\emptyset) \Rightarrow \mathit{odd}(N,NS) \! \not= \! \mathit{tt}$

\item $\emptyset \!:\!\!\mathit{Nat} \not= \mathit{tt}$,
$(N \subseteq NS \not= \mathit{tt} \, \wedge \,  NS \not=
\emptyset) \Rightarrow (N,NS)\!:\!\!\mathit{Nat}  \not= \mathit{tt}$.
\end{itemize}

\noindent where $N$ and $M$ have sort $\mathit{Nat}$ and $NS$ sort $\mathit{Natset}$.
As explained in Appendix A of \cite{gutierrez-meseguer-var-pred-tech-rep}, the first
equivalence can be automatically checked using folding variant
narrowing. For a proof that the two equivalences hold in $\mathcal{C}_{\mathcal{R}}$ for
these predicates and their patterns (a few patterns are missing in the
proof  by mistake)
 see \cite{gutierrez-meseguer-var-pred-tech-rep}.
\end{example}


\noindent {\bf The Inductive Satisfiability Decision Procedure}.
Assume $\mathcal{R}$ satisfies conditions (1)--(3) above and let
$\Sigma = \Delta \uplus \Omega_{c} \uplus \Omega_{\Pi}$, and $E$ be
the axioms $B$ plus the equations associated with the rules $R$ in
$\mathcal{R}$.  Given a QF $\Sigma$-formula $\varphi$ the procedure
decides if $\varphi$ is satisfiable in $\mathcal{C}_{\mathcal{R}}$. We
can reduce the inductive validity decision problem of whether
$\mathcal{C}_{\mathcal{R}} \models \varphi$ to deciding whether
$\neg \varphi$ is unsatisfiable in $\mathcal{C}_{\mathcal{R}}$.  Since
any QF $\Sigma$-formula $\varphi$ can be put in disjunctive normal
form, a disjunction is satisfiable in $\mathcal{C}_{\mathcal{R}}$ iff
one of the disjuncts is, and all predicates have been turned into
functions of sort $\mathit{Pred}$, it is enough to decide the
satisfiability of a conjunction of $\Sigma$-literals of the form
$\bigwedge G \wedge \bigwedge D$, where the $G$ are equations and the
$D$ are disequations.  The procedure performs the following steps:

\begin{enumerate}
\item {\bf Unification}.  Satisfiability of the conjunction $\bigwedge
  G \wedge \bigwedge D$ is replaced by satisfiability for
some conjunction in the set 
$\{ (\bigwedge D \alpha)!_{R,B} \mid \alpha \in
\mathit{VarUnif}_{E}(\bigwedge G)\}$, discarding any obviously
unsatisfiable $(\bigwedge D \alpha)!_{R,B}$ in such a set.

\item $\Pi$-{\bf Elimination}.  After Step (1), each
  conjunction is a conjunction of disequalities $\bigwedge D'$.  If $\bigwedge D'$ is
a $\Delta \uplus \Omega_{c}$-formula, we go directly to Step (3);
otherwise $\bigwedge D'$ has the form
$\bigwedge D' = \bigwedge D_{1} \wedge p(t_{1}, \ldots,t_{n}) \not=
\mathit{tt} \wedge \bigwedge D_{2}$, where $p \in \Omega_{\Pi}$ and $D_{1}$ and/or $D_{2}$ may be 
empty conjunctions.  We then replace $\bigwedge D'$ by all
not obviously unsatisfiable conjunctions of the form:
\[
(\bigwedge D_{1} \wedge \bigwedge_{1 \leq l \leq n_{j}} {w^{j}}_{l} \not= {w^{',j}}_{l} \wedge
\bigwedge D_{2}) \theta \alpha
\]
where $1 \leq j \leq m_{p}$, $W = \mathit{vars}(\bigwedge D')$,
 $(p(t'_{1}, \ldots,t'_{n}),\theta) \in 
\llbracket p(t_{1}, \ldots,t_{n}) \rrbracket^{W,\Omega}_{R,B}$, and
$\alpha$ is a \emph{disjoint} $B_{\Omega_{c}}$-unifier of
the equation $p(t'_{1}, \ldots,t'_{n}) =
p({v^{j}}_{1},\ldots,{v^{j}}_{n})$ (i.e., sides are
renamed to \emph{share no variables} and 
$\mathit{ran}(\alpha) \cap (W \cup \mathit{ran}(\theta))=\emptyset$).
We use the negative constrained
patterns of $p$ and the constructor variants of $p(t_{1},
\ldots,t_{n})$ to \emph{eliminate} the disequality $ p(t_{1}, \ldots,t_{n}) \not=
\mathit{tt}$.  If  for some
$p' \in \Omega_{\Pi}$ some
disequality remains in 
$(\bigwedge D_{1} \wedge \bigwedge D_{2}) \theta \alpha$, we iterate Step 2.

\item  {\bf Computation of} $\Omega^{\wedge}_{c}$-{\bf Variants and}
 {\bf Elimination of Finite Sorts}. For $\bigwedge D'$ 
a $\Delta \uplus \Omega_{c}$-conjunction of disequalities, viewed
as a $(\Delta \uplus \Omega_{c})^{\wedge}$-term
its constructor $\Omega ^{\wedge}_{c}$-variants are of the
form $(\bigwedge D'',\gamma)$, with $\bigwedge D''$ an
$\Omega_{c}$-conjunction of disequalities.  
The variables of $\bigwedge D''$ are then
$Y_{\mathit{fin}} \uplus Y_{\infty}$, with $Y_{\mathit{fin}}$ the variables whose sorts are finite,
and $Y_{\infty}$ the variables with infinite sorts.
Compute all normalized ground substitution $\tau$ of the variables $Y_{\mathit{fin}}$
obtained by: (i) independently choosing for
each variable $y \in Y_{\mathit{fin}}$
a canonical representative for the sort of $y$
in all possible ways, and (ii) checking that for the $\tau$ so chosen $\bigwedge D'' \tau$ is
normalized, keeping $\tau$ if this holds and discarding it otherwise.
 Then $\bigwedge D'$ is satisfiable in
$\mathcal{C}_{\mathcal{R}}$ iff some $\bigwedge D'' \tau$ so
obtained is $B_{\Omega_{c}}$-consistent for
some $\Omega ^{\wedge}_{c}$-variant
$(\bigwedge D'',\gamma)$ of $\bigwedge D'$.
\end{enumerate}

\begin{example} We can illustrate the use of the above decision procedure by
  proving the validity of the QF formula
  $\mathit{odd}(N) = \mathit{tt} \Leftrightarrow \mathit{even}(N)
  \not= \mathit{tt}$
  in the initial algebra $\mathcal{C}_{\mathcal{R}}$ of
  $\mathit{NatSetPreds}$.  That is, we need to show that its negation
  $(\mathit{odd}(N) = \mathit{tt} \wedge \mathit{even}(N)
  =\mathit{tt}) \vee (\mathit{odd}(N) \not= \mathit{tt} \wedge
  \mathit{even}(N) \not=\mathit{tt})$
  is unsatisfiable in $\mathcal{C}_{\mathcal{R}}$.  Applying the {\bf
    Unification} step to the first disjunct
  $\mathit{odd}(N) = \mathit{tt} \wedge \mathit{even}(N) =\mathit{tt}$
  no variant unifiers are found, making this disjunct unsatisfiable.
  Applying the $\Pi$-{\bf Elimination} step to the first disequality
  in the second disjunct
  $\mathit{odd}(N) \not= \mathit{tt} \wedge \mathit{even}(N)
  \not=\mathit{tt}$,
  since the only constructor variant of $\mathit{odd}(N)$ different
  from $\mathit{tt}$ is the identity variant, and the only disjoint
  $B_{\Omega_{c}}$-unifier of $\mathit{odd}(N)$ with the negative patterns for
  $\mathit{odd}$ is $\{N \mapsto M + M\}$ for the (renamed)
 unconstrained negative pattern $\mathit{odd}(M + M) \not= \mathit{tt}$,
  we get the disequality
  $\mathit{even}(M + M) \not=\mathit{tt}$, whose normal form
  $\mathit{tt} \not=\mathit{tt}$ is unsatisfiable.
\end{example}

\begin{theorem} \label{correctness-var-sat-proc}
 For FVP $\mathcal{R} =(\Delta \uplus \Omega_{c} \uplus
\Omega_{\Pi},B_{\Delta} \uplus B_{\Omega_{c}},R_{\Delta} \uplus
R_{\Omega_{c}} \uplus R_{\Pi})$
protecting
$\mathcal{R}_{\Omega}=(\Omega_{c} \uplus
\Omega_{\Pi},B_{\Omega_{c}},R_{\Omega_{c}} \uplus R_{\Pi})$ and
satisfying above conditions (1)--(3),
the above procedure correctly decides the satisfiability
of a QF $\Sigma$-formula $\varphi$ in the canonical term algebra
$\mathcal{C}_{\mathcal{R}}$.
\end{theorem}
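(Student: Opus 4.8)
The plan is to show that each of the three steps replaces the current satisfiability question by an equisatisfiable one, that the whole procedure terminates, and that the test reached at the end of Step~3 is decidable; soundness, completeness and termination together then give a decision procedure. Throughout I would use that $\mathcal{C}_{\mathcal{R}} \cong T_{\Sigma/E}$ is the initial algebra, so a QF formula is satisfiable in $\mathcal{C}_{\mathcal{R}}$ iff some ground substitution $\sigma$ of its free variables makes every equation $u=v$ satisfy $u\sigma =_E v\sigma$ (equivalently $(u\sigma)!_{R,B} =_B (v\sigma)!_{R,B}$) and every disequation fail the corresponding $B$-equality. After the stated reduction to DNF it suffices to decide satisfiability of a single conjunction $\bigwedge G \wedge \bigwedge D$, and I may take the witnessing $\sigma$ to be $R,B$-normalized.

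For Step~1 I would invoke the strong completeness of $\mathit{VarUnif}_{E}$ recalled at the end of Section~\ref{osa-prelims}: a normalized ground $\sigma$ satisfies $\bigwedge G$ iff it is a normalized $E$-unifier, hence $\sigma|_W =_B (\alpha\rho)|_W$ for some $\alpha \in \mathit{VarUnif}_{E}(\bigwedge G)$ and normalized $\rho$; then, since $\mathit{vars}(D) \subseteq W$, $\sigma$ satisfies $\bigwedge D$ iff $\rho$ satisfies $(\bigwedge D\alpha)!_{R,B}$. Conversely, any $\rho$ satisfying some $(\bigwedge D\alpha)!_{R,B}$ extends $\alpha$ (an $E$-unifier of $G$) to a satisfying assignment. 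This yields equisatisfiability of $\bigwedge G \wedge \bigwedge D$ with the displayed set, the discarding of obviously unsatisfiable members being sound as they carry a literal $t \neq t$.

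Step~2 is the crux, and I expect the $\Pi$-elimination to be the main obstacle. Given a disequation $p(\bar t) \neq \mathit{tt}$, I would first characterize exactly the ground $\sigma$ satisfying it via the second semantic equivalence of condition~(3): $p(\bar t)\sigma \neq_E \mathit{tt}$ iff $[p(\bar t)\sigma]$ instantiates some negative pattern $p(\bar v^{\,j})$ by a normalized $\rho_j$ validating $\bigwedge_{l} (w^{j}_{l} \neq {w'}^{j}_{l})\rho_j$. To turn this into the stated syntactic operation, I would apply Lemma~\ref{non-free-ctror-var-lemma} to $p(\bar t)$: every normalized ground instance is covered, modulo $B$, by a constructor variant $(p(\bar t'),\theta) \in \llbracket p(\bar t)\rrbracket^{W,\Omega}_{R,B}$, so both $p(\bar t')$ and the pattern head $p(\bar v^{\,j})$ are normalized terms with $\Omega_{c}$-arguments. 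The delicate point is why a \emph{disjoint} $B_{\Omega_{c}}$-unifier $\alpha$ of $p(\bar t') = p(\bar v^{\,j})$ is precisely what is needed rather than full $E$-unification: the \emph{first} equivalence of~(3) guarantees that constrained normalized pattern instances remain normalized, so equality in $\mathcal{C}_{\mathcal{R}}$ of two such normalized instances collapses to $B_{\Omega_{c}}$-equality on their arguments (the only axioms on the relevant symbols being $B_{\Omega_{c}}$), whence $B_{\Omega_{c}}$-unification is complete for matching variant against pattern. Pushing $\theta\alpha$ through $D_1,D_2$ and replacing $p(\bar t)\neq\mathit{tt}$ by the instantiated constraint $\bigwedge_{l}(w^{j}_{l}\neq {w'}^{j}_{l})\theta\alpha$ then gives equisatisfiability of $\bigwedge D'$ with the displayed disjunction. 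Termination of the Step~2 loop follows because each replacement removes one predicate disequation and introduces only $\Omega_{c}$-disequations $w^{j}_{l}\neq {w'}^{j}_{l}$, while $\theta\alpha$ leaves the predicate disequations remaining in $D_1,D_2$ still of predicate form; hence the number of predicate literals strictly decreases.

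For Step~3 the conjunction is a pure $\Delta\uplus\Omega_{c}$-formula $\bigwedge D'$, which I view as a $(\Delta\uplus\Omega_{c})^{\wedge}$-term. Since $\mathcal{R}^{\wedge}$ inherits FVP and assumptions~(i)--(ii), Lemma~\ref{non-free-ctror-var-lemma} applies and its constructor $\Omega^{\wedge}_{c}$-variants $(\bigwedge D'',\gamma)$ cover, modulo $B_{\Omega_{c}}$, all normalized ground instances, reducing satisfiability of $\bigwedge D'$ to that of some pure $\Omega_{c}$-conjunction $\bigwedge D''$; in particular all defined symbols of $\Delta$ are evaluated away, so the $\bigwedge D''$ contain no $\Delta$-redexes. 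Splitting the variables of $\bigwedge D''$ into $Y_{\mathit{fin}}\uplus Y_{\infty}$, I would enumerate the finitely many normalized ground $\tau$ on $Y_{\mathit{fin}}$ using the representatives supplied by OS-compactness condition~(i), retaining those for which $\bigwedge D''\tau$ stays normalized so that, its terms being $R_{\Omega_{c}},B_{\Omega_{c}}$-normalized, consistency in the theory of $\mathcal{R}_{\Omega_{c}}$ coincides with $B_{\Omega_{c}}$-consistency; completeness of discarding the non-normalized $\tau$ rests on the covering property of the full set of constructor variants, which already branches on every constructor reduction triggered by such an instantiation. The surviving variables all have infinite sorts ($\mathit{Pred}$ being non-finite by condition~(2)), so OS-compactness condition~(iii) makes $\bigwedge D''\tau$ satisfiable iff $B_{\Omega_{c}}$-consistent, a decidable test by the finitary $B_{\Omega_{c}}$-unification algorithm (cf.\ Theorem~\ref{compact-sat}). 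Chaining the three equisatisfiability results with the termination argument for Step~2 and the finiteness of all variant and unifier sets (FVP together with finitary $B$-unification) then yields that the procedure halts and correctly decides satisfiability of $\varphi$ in $\mathcal{C}_{\mathcal{R}}$.
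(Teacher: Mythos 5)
Your proposal is correct and follows essentially the same route as the paper's own proof: equisatisfiability of each of the three transformations (via completeness of variant unifiers for Step~1, the two semantic equivalences for negative patterns plus the constructor-variant lemma for Step~2, and constructor $\Omega^{\wedge}_{c}$-variants plus OS-compactness for Step~3), together with termination by the decreasing count of $\Pi$-literals. The only notable difference is that you make explicit why disjoint $B_{\Omega_{c}}$-unification (rather than full $E$-unification) suffices when matching a constructor variant against a pattern head, a point the paper's proof passes over more quickly.
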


\noindent {\bf Sort Predicates for Recursive Data Structures}.
We can axiomatize many (non-circular) recursive data structures as the
elements of an initial algebra $T_{\Omega}$ on a many-sorted signature
of free constructors $\Omega$.  For example, lists 
can be so axiomatized with $\Omega$ consisting of just two sorts,
$\mathit{Elt}$, viewed as a parametric sort of list elements,
 and $\mathit{List}$,
a constant $\mathit{nil}$  of sort $\mathit{List}$,
and a ``cons'' constructor $\_;\_ : \mathit{Elt} \, \mathit{List}
\rightarrow \mathit{List}$.    

In general, however, adding to such data structures defined
functions corresponding to ``selectors''  that can extract the
constituent parts of each data structure cannot be done in a
satisfactory way if we remain within a many-sorted setting.  For
example, for lists  we would like to have selectors
$\mathit{head}$ and $\mathit{tail}$ (the usual $\mathit{car}$ and
$\mathit{cdr}$ in Lisp notation).  
For $\mathit{head}$ the natural equation is
$\mathit{head}(x ; l)=x$. Likewise, the natural equation for
$\mathit{tail}$ is $\mathit{tail}(x ; l)=l$.
 But this leaves open the problem
of how to define $\mathit{head}(\mathit{nil})$, for which no
satisfactory solution exists.  J. Meseguer and J.A. Goguen proposed a simple
solution to this ``constructor-selector'' problem using initial order-sorted
algebras in \cite{csel}.  The key idea is the following.  For each
non-constant constructor symbol, say $c: A_{1} \ldots A_{n}
\rightarrow B$, $n \geq 1$,
we introduce a subsort $B_{c} < B$ and give the tighter typing
$c: A_{1} \ldots A_{n} \rightarrow B_{c}$.  The selector problem is now
easily solved by associating to each non-constant constructor $c$ selector functions
$\mathit{sel}^{c}_{i} :B_{c} \rightarrow A_{i}$, $1 \leq i \leq n$,
defined by the equations $\mathit{sel}^{c}_{i}(c(x_{1}, \ldots,
x_{n}))=x_{i}$, $1 \leq i \leq n$.  Outside the subsort $B_{c}$ the
selectors $\mathit{sel}^{c}_{i}$ are actually undefined.
For the above example of lists this just means adding a subsort
$\mathit{List}_{\_;\_} < \mathit{List}$, where
$\mathit{List}_{\_;\_}$ is usually written as $\mathit{NeList}$
(non-empty lists), and tightening  the typing of ``cons'' to
$\_;\_ : \mathit{Elt} \, \mathit{List}
\rightarrow \mathit{NeList}$.   In this way the
$\mathit{head}$ and $\mathit{tail}$ selectors have typings
$\mathit{head} : \mathit{NeList}
\rightarrow \mathit{Elt}$ and $\mathit{tail} : \mathit{NeList}
\rightarrow \mathit{List}$, again with equations 
$\mathit{head}(x ; l)=x$ and $\mathit{tail}(x ; l)=l$, with 
$x$ of sort $\mathit{Elt}$ and $l$ of sort $\mathit{List}$.

We have just described a  general theory transformation
$\Omega \mapsto (\widetilde{\Omega} \uplus \Delta,E_{\Delta})$
from any MS signature $\Omega$ to an OS theory with selectors $\Delta$.
Due to space limitations, the following key facts
are discussed in detail in Section 4.2 of \cite{gutierrez-meseguer-var-pred-tech-rep}:
(1) $(\widetilde{\Omega} \uplus \Delta,\emptyset,R(E_{\Delta}))$ is FVP
with $(\widetilde{\Omega},\emptyset,\emptyset)$  as its constructor decomposition.
(2) To increase expressiveness, we can define for each subsort $B_{c}$ associated with a constructor
$c$ a corresponding equationally-defined sort predicate $\_\!
:\!\! B_{c}$, thus obtaining a decomposition 
$(\widetilde{\Omega} \uplus \Pi  \uplus \Delta, \emptyset,R(E_{\Delta}) \uplus
R(E_{\Pi}))$ that is also FVP.  (3) Each sort predicate $\_\!
:\!\! B_{c}$ has an associated set of \emph{negative patterns}, so
that our variant satisfiability algorithm makes satisfiability
of QF formulas in the initial algebra $T_{\widetilde{\Omega} \uplus \Pi
  \uplus\Delta/E_{\Delta} \uplus E_{\Pi}}$ \emph{decidable}.

\begin{example} (Lists of Naturals with Sort Predicates).
We can instantiate the above order-sorted theory of lists with 
selectors $\mathit{head}$ and $\mathit{tail}$ by 
instantiating the  parameter sort $\mathit{Elt}$  to
a sort $\mathit{Nat}$ with constant $0$, subsort
$\mathit{NzNat} < \mathit{Nat}$, and unary
constructor $s: \mathit{Nat} \rightarrow \mathit{NzNat}$
with  selector $p: \mathit{NzNat} \rightarrow \mathit{Nat}$
satisfying the equation $p(s(n))=n$.  We then extend this
specification with sort predicates 
$\_\! :\!\! \mathit{NzNat} : \mathit{Nat} \rightarrow \mathit{Pred}$
and
$\_\! :\!\! \mathit{NeList} : \mathit{List} \rightarrow
\mathit{Pred}$,
defined by equations
$n' \! :\!\! \mathit{NzNat} = \mathit{tt}$
and $l' \! :\!\! \mathit{NeList} = \mathit{tt}$,
with $n'$ of sort $\mathit{NzNat}$ and $l'$ of sort $\mathit{NeList}$.
Their corresponding negative patterns are:
$0 \! :\!\! \mathit{NzNat} \not= \mathit{tt}$ and
$\mathit{nil} \! :\!\! \mathit{NeList} \not= \mathit{tt}$.

One advantage of adding these sort predicates is that some properties not
expressible as QF formulas become QF-expressible.  For example, to
state that every number is either $0$ or a non-zero number
(resp. every list is either $\mathit{nil}$ or a non-empty list) we need
the formula $n=0 \vee (\exists n') \; n=n'$ (resp.
$l= \mathit{nil} \vee (\exists l') \; l=l'$), where $n$ has sort
$\mathit{Nat}$ and
$n'$ sort $\mathit{NzNat}$ (resp. $l$ has sort
$\mathit{List}$ and
$l'$ sort $\mathit{NeList}$).  But with sort predicates this
can be expressed by means of the QF formula
$n=0 \vee  n \! :\!\! \mathit{NzNat} = \mathit{tt}$ 
(resp. $l= \mathit{nil} \vee l \! :\!\! \mathit{NeList} = \mathit{tt}$).
\end{example}

\vspace{-5ex}

\section{Implementation} \label{implementation}

\vspace{-.5ex}

We have implemented the variant satisfiability decision procedure
of  Section~\ref{inductive-validity-with-preds} in a
new prototype tool. The implementation consists of 11 new Maude modules (from
17 in total), 2345 new lines of code, and uses the Maude's
\texttt{META-LEVEL} to carry out the steps of the
procedure in a reflective way. We have also developed a Maude interface to ease
the definition of properties and patterns as equations.
The three steps of the variant satisfiability procedure are
implemented using Maude's \texttt{META-LEVEL} functions.  Let us illustrate them
for $\mathit{NatSetPreds}$.

\begin{example} \label{tool-example}
We can prove the inductive validity of the formula
$\texttt{N - M} = \texttt{0} \Leftrightarrow (\texttt{M > N} =
\texttt{tt} \vee \texttt{N} = \texttt{M})$,
where $\texttt{N - M}$ denotes $\texttt{N}$ ``monus'' $\texttt{M}$,
by showing that each conjunction in its negation,
$(\texttt{N - M} = \texttt{0} \wedge \texttt{M > N} \neq
    \texttt{tt} \wedge \texttt{N} \neq \texttt{M}) \vee (\texttt{N -
      M} \neq \texttt{0} \wedge \texttt{M > N} = \texttt{tt}) \vee
    (\texttt{N - M} \neq \texttt{0} \wedge \texttt{N} = \texttt{M})$
    is unsatisfiable.  For the first conjunct the algorithm's three
    steps are as follows.
 After the {\bf unification} step, we obtain
 $\texttt{(V2 + V3) > V2} \neq
      \texttt{tt} \wedge \texttt{V2} \neq \texttt{V2 + V3}$, where
    \texttt{V2} and \texttt{V3} are variables of sort
    \texttt{Natural}.
  Applying the $\Pi$-{\bf elimination} step, we obtain:
    $\texttt{V4} \neq \texttt{V4 + 0}$, where
    \texttt{V4} is a variable of sort \texttt{Natural}. After
    normalization, the formula becomes $B_{\Omega_{c}}$-inconsistent
    and therefore unsatisfiable.  The other two conjuncts are likewise unsatisfiable.
 \end{example}

For a  more detailed discussion of the
implementation see Section 5 of \cite{gutierrez-meseguer-var-pred-tech-rep}.

\section{Related Work and Conclusions} \label{related-concl}

The original paper proposing the concepts of variant and FVP is
\cite{comon-delaune}.  FVP ideas have been further advanced in
\cite{variant-JLAP,ciobaca-thesis,dran-etal-fvp,variants-of-variants}.
Variant satisfiability has been studied on
\cite{var-sat,var-sat-short,MLAVBS-WRLA16}.
In relation to that work, the main contribution of this
paper is the extension of variant satisfiability  to handle
user-definable predicates.

As mentioned in the Introduction, satisfiability decision procedures
can be either theory-specific or theory-generic.  Two recent advanced
textbooks on theory-specific decision procedures are \cite{DBLP:books/daglib/0019162}
and \cite{DBLP:series/txtcs/KroeningS08}.  These two classes of
procedures complement each other: theory specific ones are more
efficient; but theory-generic ones are user-definable and can
substantially increase the range of SMT solvers.

Other theory-generic
satisfiability approaches include: (i) the superposition-based one,
e.g., \cite{DBLP:conf/lics/LynchM02,DBLP:journals/iandc/ArmandoRR03,DBLP:conf/cade/LynchT07,DBLP:journals/tocl/ArmandoBRS09,DBLP:journals/scp/TushkanovaGRK15},
where it is proved that a
superposition theorem proving inference system terminates for a given first-order
theory together with any given set of ground clauses representing a
satisfiability problem; and (ii) that of
decidable theories defined by means of formulas with
triggers \cite{DBLP:journals/jar/DrossCKP16}, that
allows a user to define a new theory with decidable QF satisfiability 
by axiomatizing it according to some
requirements, and then making an SMT solver extensible by
such a user-defined theory. While not directly
comparable to the present one, these approaches
(discussed in more detail in \cite{var-sat}) can be seen as complementary
ones, further enlarging the repertoire of theory-generic satisfiability methods.

In conclusion, the present work has extended variant satisfiability to
support initial algebras specified by FVP theories with
user-definable predicates under fairly general conditions.  Since
such predicates are often needed in specifications, this substantially
enlarges the scope of variant-based initial
satisfiability algorithms.  The most obvious next step is to 
combine the original variant satisfiability  algorithm defined in \cite{var-sat,var-sat-short}
and implemented in \cite{MLAVBS-WRLA16} with the present one.
To simplify both the exposition and the prototype implementation,
a few simplifying assumptions, such as the assumption that the signature
$\Omega$ of constructors and that $\Delta$ of defined functions
share no subsort-overloaded symbols, have been made.  
For both greater
efficiency and wider applicability,
the combined generic
algorithm will drop such assumptions and will use constructor
unification \cite{var-sat,MLAVBS-WRLA16}. 

\vspace{2ex}

\bibliographystyle{splncs03}
\bibliography{tex}

\end{document}